\newtheorem{theoremx}{Theorem}
\newtheorem{corollaryx}{Corollary}
\newtheorem{remark}{Remark}
\newtheorem{lemma}{Lemma}
\journal{Journal of The Franklin Institute, 2016, DOI: 10.1016/j.jfranklin.2016.06.025}
\begin{document}

\begin{frontmatter}

\title{Task-space coordinated tracking of multiple heterogeneous manipulators via \\ controller-estimator approaches}

\author{Ming-Feng~Ge$^a$}
\ead{fmgabc@163.com}
\author{Zhi-Hong~Guan$^{a}$}
\ead{zhguan@mail.hust.edu.cn}
\author{Chao Yang$^a$}
\author{Chao-Yang Chen$^b$}
\author{Ding-Fu Zheng$^a$}
\author{Ming Chi$^a$}

\address{$^a$College of Automation, Huazhong University of Science and Technology,\\
Wuhan, 430074, China \\
$^b$School of Information and Electrical Engineering, Hunan University of Sience and Technology
Xiangtan, Hunan, 411201, China}

\begin{abstract}
This paper studies the task-space coordinated tracking of a time-varying leader for multiple heterogeneous manipulators (MHMs),
containing redundant manipulators and nonredundant ones.
Different from the traditional coordinated control, distributed controller-estimator algorithms (DCEA),
which consist of local algorithms and networked algorithms, are developed
for MHMs with parametric uncertainties and input disturbances.
By invoking differential inclusions, nonsmooth analysis, and input-to-state stability,
some conditions (including sufficient conditions, necessary and sufficient conditions)
on the asymptotic stability of the task-space tracking errors and the subtask errors are developed.
Simulation results are given to show the effectiveness of the presented DCEA.
\end{abstract}

\begin{keyword}
task-space coordinated tracking, multiple heterogeneous manipulators (MHMs), redundant manipulator,
distributed controller-estimator algorithm (DCEA).
\end{keyword}


\end{frontmatter}



\section{Introduction}
Coordinated control has been diffusely invoked in many practical applications of multiple manipulators,
including coordination of bilateral human-swarm systems {\cite{fmg05}}, single-master-multiple-slaves teleoperation {\cite{LiSu}}, dual-user shared teleoperation {\cite{YHG}}-{\cite{LHLM03}},
multi-robot teleoperation {\cite{Liu02}}-{\cite{LDGD05}},
multi-fingered grasping and manipulation {\cite{fmg03,Gueaieb01}},
due to their prominent superiority comparing with traditional centralized control,
such as stronger stability, less energy consumption, greater operational efficiency {\cite{Olfati}}-{\cite{Guan04}}.

Existing works focused on joint-space synchronization of two manipulators, namely, master-slave bilateral teleoperators, using various control technologies, such as adaptive control, passivity-based control, proportional-derivative control {\cite{Imaida}}-{\cite{Islam}}.
However, the single-master-single-slave framework containing two manipulators is already too simple for some complicated practical tasks.
By invoking cooperative concurrent control, synchronization of interconnected Lagrangian systems had been investigated {\cite{Chung}}.
Distributed containment control had been developed for nonlinear multi-agent systems
in the presence of dynamical uncertainties and external disturbances and applied to Lagrangian networks {\cite{Mei01}}.
However, the kinematics of robotic manipulators had not been taken into consideration in the above literatures.
In reality, task-space algorithms considering kinematics of manipulators are more practical and applicative comparing with joint-space algorithms.
It thus motivates a group of researches on task-space algorithms.
In presence of kinematic and dynamic uncertainties, task-space synchronization had been addressed for multiple manipulators
under strong connected graphs by invoking passivity control {\cite{Wang01}} and adaptive control {\cite{Wang02}}.
Note that the above literatures mainly focuses on motion control of kinematically identical and nonredundant robotic manipulators
with parametric uncertainties.

Redundant manipulators can achieve more performance benefits in contrast to nonredundant ones {\cite{Zergeroglu01,XuZhangL}}.
On the other hand, multiple manipulators containing both redundant and nonredundant individuals, namely,
multiple heterogeneous manipulators (MHMs), are necessary and inevitable in some natural and man-made systems,
$e.g.$, human hands and multi-fingered hands {\cite{fmg03,Gueaieb01}}.
Inspired by conceivable performance benefits of MHMs, task-space synchronization of MHMs
under balanced connected topologies had been achieved via the passivity control technology in {\cite{Liu01}},
which can synchronize the combination signals of task-space position/velocity tracking errors to zero.

In this paper, we present some novel DCEA to achieve the task-space coordinated tracking of a time-varying leader
for MHMs with parametric uncertainties and input disturbances, and the interaction topology of the MHMs are
assumed to contain a spanning tree.
The main contributions are summarised as follows.
1) Different from the coordination algorithms considering dynamics of Euler-Lagrange systems {\cite{Chung,Mei01}},
both the dynamics and kinematics are considered, which is more practical and challenging.
2) Different from the coordination algorithms for identical manipulators with
a constant agreement value {\cite{Wang01,Wang02}}, the tracking of a time-varying leader for MHMs is studied.
3) Different from balanced interaction topologies studied in {\cite{Liu01}},
digraphs containing a spanning tree are invoked to describe the interaction topology.
4) The novel controller-estimator structure provides a theoretical guidance for coordinated control
of various networked multi-agent systems, whose dynamics are complex and strong nonlinear.

The rest of the paper is structured as follows.
In Section 2, the preliminaries are presented.
In Section 3, the main results are presented.
In Section 4, the simulations are given.
In Section 5, the conclusion is proposed.

\emph{Notation:} ${\mathbb R} ^ n$ represents the $n$-dimensional Euclidean space, $I_n$ denotes the $n \times n$ identity matrix,
$1_n = [1,\cdots,1]^T$ represents the $n$-dimensional column vector, $\left\| \cdot \right\|$ represents the Euclidean norm,
$\lambda_{\min}(\cdot)$ denotes the minimum eigenvalue of the corresponding matrix.

\section{Preliminaries}

\subsection{ Dynamics and Kinematics}

The dynamics and kinematics of the $i$th individual in the MHMs
with input disturbances are given as follows {\cite{Zergeroglu01}}:
\begin{equation}\label{1.1}
\left\{ \begin{array}{lll}
{H_i}({q_i}){{\ddot q}_i} + {C_i}({q_i},{{\dot q}_i}){{\dot q}_i}
+ {g_i}({q_i}) + {d_i}(t) = {u_i},\\ \\
{x_i} = {\psi _i}({q_i}),~{{\dot x}_i} = {J_i}({q_i}){{\dot q}_i},
\end{array} \right.
\end{equation}
where $i \in {\mathcal V} = \{ 1,\ldots,n \}$,
$t \in \mathcal{Q} = [t_0,\infty)$, $t_0 \geq 0$ is the initial time,
${q_i} \in {\mathbb R}^{p_i}$ denotes the position in the joint space,
$p_i \geq 2$ represents the degrees-of-freedom (DOF) of the $i$th manipulator,
${ H_i}(q_i) \in {\mathbb R}^{p_i \times p_i}$ is the inertia matrix,
${C_i}({q_i},{{\dot q}_i}) \in {\mathbb R}^{p_i \times p_i}$ represents the centripetal-Coriolis matrix,
${g_i}(q_i) \in {\mathbb R}^{p_i}$ denotes the gravity vector,
$d_i(t) \in {\mathbb R}^{p_i}$ represents the input disturbance,
${u_i} \in {\mathbb R}^{p_i}$ stands for the torque input,
$x_i \in \Omega_x^i \subseteq {\mathbb R}^{m}$ denotes the task-space position,
$\Omega_x^i$ represents the work space,
${\psi _i}(q_i) \in {\mathbb R}^{m}$ denotes the forward kinematics,
${m} \geq 2$ denotes the task-space dimension and is thus the minimum number of DOF required to
perform a given end-effector task,
${J_i}(q_i) = \partial {\psi _i}({q_i})/ \partial {q_i} \in {{\mathbb R}^{{m} \times p_i}}$
stands for the Jacobian matrix.

The MHMs contain redundant manipulators and nonredundant ones, which means ${\mathcal V}$ consists of two subsets
$\mathcal{E} = \left\{ {i \in \mathcal{V}~|~{p_i} = {m}} \right\}$ and
$\mathcal{F} = \left\{ {i \in \mathcal{V}~|~{p_i} > {m}} \right\}$, $i.e.$,
the manipulators in $\mathcal{E}$ are nonredundant and the ones in $\mathcal{F}$ are redundant.
It is worthy to point out that a nonredundant manipulator has just a necessary number of DOF
to perform a given end-effector task ($i.e.$, maintask)
and a redundant manipulator has more DOF than the necessary number,
which results in an infinite number of joint configurations to the inverse-kinematics problem.
The redundancy can improve the functionality and flexibility of robotic manipulators and many subtasks
(including manipulability enhancement, mechanical limit avoidance,
and obstacle avoidance) can be obtained by choosing appropriate joint configurations.
Then the following algebraic operation, that will be invoked hereinafter, is given by
\begin{equation*}
  J_i^\sharp =
  \left\{ \begin{array}{lll}
    J_i^{-1}, & i \in \mathcal{E}, \\ \\
    J_i^\dag, & i \in \mathcal{F},
  \end{array}
  \right.
\end{equation*}
where for $i \in \mathcal{E}$, $J_i^\sharp = J_i^{-1} \in {\mathbb R}^{m \times m}$ denotes the normal inverse of $J_i (q_i)$;
for $i \in \mathcal{F}$, $J_i^\sharp = J_i^\dag = J_i^T{({J_i}J_i^T)^{ - 1}} \in {\mathbb R}^{p_i \times m}$
represents the pseudoinverse of $J_i (q_i)$ and satisfies the well-known Moore-Penrose conditions \cite{Nakamura}.
The following lemma that will be used hereinafter is given.

\begin{lemma}\label{l1}
For $i \in \mathcal{F}$, the algebraic operation $J_i^\sharp$ satisfies
\begin{equation*}
\begin{array}{lll}
{J_i}({I_{p_i}} - J_i^ \sharp {J_i}) = 0,~({I_{p_i}} - J_i^ \sharp {J_i})J_i^ \sharp  = 0,\\ \\
({I_{p_i}} - J_i^ \sharp {J_i})({I_{p_i}} - J_i^ \sharp {J_i}) = {I_{p_i}} - J_i^ \sharp {J_i}.
\end{array}
\end{equation*}
\end{lemma}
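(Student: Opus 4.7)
The plan is to prove the three identities by direct expansion, using the explicit form $J_i^\sharp = J_i^\dagger = J_i^T (J_i J_i^T)^{-1}$ for $i \in \mathcal{F}$ together with the Moore-Penrose conditions cited in the text. In particular I would rely on the two algebraic conditions $J_i J_i^\dagger J_i = J_i$ and $J_i^\dagger J_i J_i^\dagger = J_i^\dagger$, both of which follow immediately from the formula: $J_i J_i^\dagger = J_i J_i^T (J_i J_i^T)^{-1} = I_m$ (so $J_i$ has full row rank here, which is implicit in $p_i > m$ and the assumption that $J_i J_i^T$ is invertible), and this at once yields $J_i J_i^\dagger J_i = J_i$ and $J_i^\dagger J_i J_i^\dagger = J_i^\dagger$.

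For the first identity I would distribute and cancel: $J_i(I_{p_i} - J_i^\sharp J_i) = J_i - (J_i J_i^\sharp) J_i = J_i - J_i = 0$. For the second identity I would use the dual cancellation: $(I_{p_i} - J_i^\sharp J_i) J_i^\sharp = J_i^\sharp - J_i^\sharp (J_i J_i^\sharp) = J_i^\sharp - J_i^\sharp = 0$, where in both cases $J_i J_i^\sharp = I_m$ is what makes the cancellation immediate.

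For the third (idempotency) identity I would expand the product:
\begin{equation*}
(I_{p_i} - J_i^\sharp J_i)(I_{p_i} - J_i^\sharp J_i) = I_{p_i} - 2 J_i^\sharp J_i + J_i^\sharp (J_i J_i^\sharp) J_i,
\end{equation*}
and then apply $J_i J_i^\sharp = I_m$ once more to collapse the final term to $J_i^\sharp J_i$, yielding $I_{p_i} - J_i^\sharp J_i$ as required. This shows that $I_{p_i} - J_i^\sharp J_i$ is the projector onto the null space of $J_i$, which is the geometric content underlying the redundancy decomposition used later for the subtask design.

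There is no real obstacle here: the whole lemma is a bookkeeping consequence of the Moore-Penrose identities and the fact that, in the redundant case, $J_i J_i^T$ is assumed invertible so that $J_i J_i^\dagger = I_m$. The only point to be careful about is not confusing the two sides—$J_i^\dagger J_i$ is a $p_i \times p_i$ projector (not the identity), while $J_i J_i^\dagger = I_m$—and the three identities are simply the three cancellation patterns this asymmetry produces.
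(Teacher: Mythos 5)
Your proof is correct: the paper states this lemma without proof (it simply cites the Moore--Penrose conditions via the reference to Nakamura), and your direct expansion using $J_iJ_i^\dagger = J_iJ_i^T(J_iJ_i^T)^{-1} = I_m$ is exactly the standard verification the authors are implicitly relying on. The one hypothesis you correctly flag --- invertibility of $J_iJ_i^T$ --- is supplied by the paper's Remark~1 (kinematic singularities avoided, $\mathrm{rank}(J_i^\sharp) = m$) rather than by $p_i > m$ alone, but with that in hand every step goes through.
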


For any $i \in \mathcal V$, the properties of system (\ref{1.1}) are given as follows \cite{Liu01,Kelly}. \\
(P1) ${H_i}({q_i})$ is positive definite. ${\dot H_i}({q_i}) - 2 C_i(q_i,\dot q_i)$ is skew symmetric; \\
(P2) The dynamic terms ${H_i}({q_i})$, ${g_i}({q_i})$, $d_i(t)$ are bounded for all possible $q_i$,
and $\| {C_i}({q_i},{{\dot q}_i}) \| \leq {\bar c}_i \| {\dot q}_i \|$, where ${\bar c}_i > 0$ is a positive constant;\\
(P3) The dynamic terms can be parameterized, $i.e.$, ${H_i}({q_i})x + {C_i}({q_i},{{\dot q}_i})y + {g_i}({q_i}) = Y_i (q_i, \dot q_i, y, x) \vartheta_i$, where $x,y$ are any proper vectors, $Y_i (q_i, \dot q_i, y, x)$ is the regressor and $\vartheta_i$ is a set of constant dynamic parameters.

\begin{remark}
By the actual characteristics of MHMs, we assume that the kinematic terms ${J_i}({q_i})$ and $J_i^ \sharp(q_i)$ are bounded;
the kinematic singularities are avoided, $i.e.$, ${\rm rank} (J_i^ \sharp(q_i)) = m$, for $i \in \mathcal V$.
The above assumptions are the general properties of multiple manipulators, see \cite{Zergeroglu01,Kelly} for details.
\end{remark}

\subsection{ Graph Theory}

The interaction of the MHMs is denoted by a digraph $\mathcal{G} = \left\{ \mathcal{V} ,\xi,\mathcal{A} \right\}$,
where ${\mathcal V}$ is the node set, ${\xi} \subseteq \mathcal{V} \times \mathcal{V}$ is the edge set,
$\mathcal{A} = {\left[ { \varepsilon_{ij}} \right]_{n \times n}}$ is the weighted adjacency matrix with nonnegative adjacency elements.
An edge in $\mathcal{G}$ is denoted by an ordered pair $(v_i,v_j)$.
$(v_i,v_j) \in \xi$ if and only if node $j$ ($i.e.$, the $j$th manipulator) can directly access the information of node $i$.
${{\mathcal N}_j} = \left\{ {i \in {\cal V}~|~(v_i,v_j) \in \xi } \right\}$ denotes the neighbor set of node $j$.
$\mathcal{A}$ is defined as $i \in {{\cal N}_j}  \Leftrightarrow {\varepsilon_{ji}} > 0$, otherwise ${\varepsilon_{ji}} = 0$, and ${\varepsilon_{ii}} = 0$, $\forall i,j \in \mathcal{V}$.
$\mathcal{D} = diag\{ d_1, \ldots , d_n \} $ is the degree matrix,
where ${d_i} = \sum\nolimits_{j \in {\mathcal N}_i} {\varepsilon _{ij}}$.
The Laplacian matrix is defined as $\mathcal{L}=\mathcal{D}-\mathcal{A}$.
A directed path is an ordered sequence $v_1,v_2,\cdots,v_\omega$ satisfying that any ordered pair of vertices appearing consecutively in the sequence is an edge of the digraph $\mathcal{G}$.
A digraph contains a spanning tree if there exists a root node that has a directed path to all the other nodes.
$\mathcal{B} = [b_1,\ldots,b_n]^T$ is the weight vector between the $n$ nodes and the leader,
where $b_i > 0$ if the states of the leader is available to node $i$, namely, node $i$ is pinned; $b_i = 0$ otherwise.
The states of the leader (node $0$) $x_0$, $v_0$ and $a_0 \in {\mathbb R}^{m}$ satisfy $\dot x_0 = v_0$, $\dot v_0 = a_0$.
The following assumptions that will be used hereinafter are given. \\
(A1) The leader has a directed path to all the nodes in $\mathcal G$ under $\mathcal B$; \\
(A2) $\| v_0 \|_{\infty} < \beta_1$, $\| a_0 \|_{\infty} < \beta_2$ and $\| \dot a_0 \|_{\infty} < \beta_3$, where $\beta_1$, $\beta_2$ and $\beta_3$ are positive constants.

\begin{remark}
By Assumption {A2}, the derivatives of the states of the leader $x_0$, $v_0$ and $a_0$
are bounded, which happens to be the actual characteristics of the trajectories that can be
reachable by the real-world manipulators described by Euler-Lagrange equations {\cite{MengDi}}.
\end{remark}

\subsection{Problem Statement}

The control tasks of this paper are presented in this section.
It can be seen that for given $\dot x_i$, ${{\dot x}_i} = {J_i}({q_i}){{\dot q}_i}$ admits an infinite number of $\dot q_i$ when $i \in \mathcal{F}$
and only a single solution when $i \in \mathcal{E}$.
Thus, to accomplish a given end-effector task, there exist only a single joint configuration for nonredundant manipulators
and an infinite number of joint configurations for redundant manipulators.
Let the given end-effector task be called maintask and the tracking of a class of special joint configuration be called subtasks.
Then the control tasks in this paper can be divided into the maintask and the subtask.
Moreover, note that the nonredundant manipulators which have only one joint configuration
for a given end-effector task ($i.e.$, maintask) cannot accomplish the subtasks.
\par
{Maintask}:
The maintask is to design distributed input $u_i$ with the states of node $i$ (the $i$th manipulator)
and its neighbour set such that the task-space coordinated tracking can be accomplished, $i.e.$,
${x_i} \to {x_0}$ and ${\dot x_i} \to {v_0}$ as $t \to \infty$, $\forall i \in {\mathcal V}$.
\par
{Subtask}:
The subtask, $e.g.$, manipulability enhancement, mechanical limit avoidance,
and obstacle avoidance, can only be accomplished by redundant manipulators.
For different subtasks with respect to different redundant manipulators,
we can construct a corresponding auxiliary vector $\varphi _i(t)$
to denote the gradients of some performance indices with respect to the corresponding subtask.
Then the subtask for the $i$th manipulator is said to be achieved if $e_{si} \to 0$ as $t \to \infty$,
where $i \in {\mathcal F}$, ${e_{si}} = ( I_{{p_i}} - J_i^ \sharp {J_i} )( \dot q_i - {\varphi _i} )$
denotes the subtask tracking error {\cite{Hsu}}.

\begin{remark}
The subtasks give more functional constraints on joint configurations and thus cannot be accomplished by nonredundant manipulators
because they have only one joint configuration for their inverse-kinematics.
Redundant manipulators have an additional DOF to accomplish the subtasks, which gives higher robustness and wider operational space
comparing with nonredundant ones. However, the control design for redundant manipulators is more complex and challenging.
Human arms, elephant trunks, and snakes are some examples of this kind of redundant system.
The redundant is applicable in many practical applications and tough challenging in theoretical analysis \cite{fmg03,Gueaieb01,Wang01}.
\end{remark}

\section{Task-space Coordinated Tracking of Multiple Heterogeneous Manipulators}
This section studies the task-space coordinated tracking for MHMs with a time-varying leader,
in which the position vector of the leader in the generalized coordinate is assumed to be bounded up to its third derivative,
which is also invoked in {\cite{MengDi}}.

\subsection{Distributed Controller-Estimator Algorithms}

In this section, DCEA for task-space coordinated tracking of MHMs are developed.
Let $\hat{x}_{i}$, ${\hat{v}_{i}}$ and ${\hat{a}_{i}} \in {\mathbb R}^{m}$ be, respectively,
the estimated value of $x_0$, $v_0$ and $a_0$ for the $i$th manipulator.
Considering the heterogeneity of the MHMs, a joint-space auxiliary velocity
${\dot{\hat q}_{ri}} \in {\mathbb R}^{p_i}$ is given by
\begin{equation}\label{1.5}
  {\dot{\hat q}_{ri}} = J_i^\sharp ({\hat v}_i - {\alpha _i} (x_i - {\hat x}_i)) + (I_{p_i} - J_i^ \sharp {J_i}){\varphi _i},
\end{equation}
where $\alpha _i$ is a positive constant,
$\varphi _i \in {\mathbb R}^{p_i}$ is the gradients of some performance indices with respect to the redundant manipulators.
Then the joint-space auxiliary acceleration ${\ddot{\hat q}_{ri}}$ is defined as
\begin{equation}\label{1.6}
  {\ddot{\hat q}_{ri}} =
  \dot J_i^\sharp ({\hat v}_i - {\alpha _i} (x_i - {\hat x}_i))
  + J_i^\sharp ( {{\hat a}_i} - {\alpha _i} (\dot x_i - {\hat v}_i) )
  + \frac{d}{dt}[ ( {I_{p_i}} - J_i^\sharp {J_i}){\varphi _i} ].
\end{equation}
Let $\zeta_i = {\rm col} ({\hat x}_i,{\hat v}_i,{\hat a}_i) \in {\mathbb R}^{3m}$.
Then for $i,j \in \mathcal V$, we define
\begin{equation}\label{0.9}
  \sigma_{ij} = \zeta_i - \zeta_j,
\end{equation}
especially, $\sigma_i = \zeta_i - {\rm col} (x_0,v_0,a_0)$.

Let ${\hat s_i} = {\dot q}_i - {\dot{\hat q}_{ri}}$.
In the presence of uncertain dynamics and input disturbance, the DCEA for MHMs consist of the control law
\begin{equation}\label{1.2}
{u_i} =  Y_i( {q_i},{\dot q}_i,{\dot{\hat q}_{ri}},{\ddot{\hat q}_{ri}} ) {\hat \vartheta}_i - J_i^T {\mathcal{K}_{xi}}J_i {{{\hat s }_i}} - {\mathcal{K}_{si}}  {{{\hat s }_i}} - {\mathcal{K}_{ri}} {\rm sgn} ( \hat s_i ),
\end{equation}
and the distributed adaptive estimators
\begin{subequations}\label{1.7}
\begin{numcases}{}
   {{\dot {\hat \vartheta} }_i} = - {\mathcal T}_i Y_i^T( {q_i},{\dot q}_i,{\dot{\hat q}_{ri}},{\ddot{\hat q}_{ri}} ){{\hat s}_i}, \label{1.8} \\ \nonumber\\
   {\dot \zeta_i} = - \left( \left[ {\begin{array}{*{20}{lll}}
                            {{\beta _1}}&{}&{}\\
                            {}&{{\beta _2}}&{}\\
                            {}&{}&{{\beta _3}}
                            \end{array}} \right] \otimes I_{m} \right)
    {\rm sgn} \left( {\sum\limits_{j \in {\mathcal N}_i} {{\varepsilon_{ij}}\sigma_{ij}} } + b_i \sigma_i \right),\label{1.9}
\end{numcases}
\end{subequations}
where $\beta_1$, $\beta_2$ and $\beta_3$ are given in Assumption A2,
$\otimes$ denotes the Kronecker product, $\sigma_{ij}$ is given in (\ref{0.9}),
${{\hat \vartheta }_i}$ is the estimated value of ${ \vartheta _i}$,
$ Y_i( {q_i},{\dot q}_i,{\dot{\hat q}_{ri}},{\ddot{\hat q}_{ri}} ){{\hat \vartheta }_i} = {\hat H_i}({q_i}){\ddot{\hat q}_{ri}} + {\hat C_i}({q_i},{{\dot q}_i}){\dot{\hat q}_{ri}} + {\hat g_i}({q_i})$,
${\hat{H}_i}({q_i})$, ${\hat{C}_i}({q_i},{{\dot q}_i})$, and ${\hat{g}_i}({q_i})$ are the estimates
of ${H_i}({q_i})$, ${C_i}({q_i},{{\dot q}_i})$, and ${g_i}({q_i})$ respectively,
$\mathcal{K}_{si}$, $\mathcal{K}_{ri} \in {\mathbb R}^{p_i \times p_i}$ and ${\mathcal{K}_{xi}} \in {\mathbb R}^{m \times m}$ are positive definite matrices, ${\mathcal T}_i$ is a designed diagonal positive definite matrix with an appropriate dimension.

\begin{remark}
It can be seen from the definition of $J_i^\sharp$ that
$I_{p_i} - J_i^ \sharp {J_i} = 0$ for $i \in \mathcal E$.
Therefore, (\ref{1.5}) implies that for redundant manipulators and nonredundant ones,
${\dot{\hat q}_{ri}}$ is distinguishing, namely,
\begin{equation*}
  {\dot{\hat q}_{ri}} = \left\{ \begin{array}{lll}
  J_i^{-1} ({\hat v}_i - {\alpha _i} (x_i - {\hat x}_i)),& i \in {\mathcal E}, \\ \\
  J_i^\dag ({\hat v}_i - {\alpha _i} (x_i - {\hat x}_i)) + (I_{p_i} - J_i^ \dag {J_i}){\varphi _i},& i \in {\mathcal F}.
  \end{array} \right.
\end{equation*}
Besides, ${\ddot{\hat q}_{ri}}$ is distinguishing with respect to redundant manipulators and nonredundant ones.
It follows that the control law (\ref{1.2}) is distinguishingly designed with respect to different sets of manipulators.
\end{remark}

\subsection{Boundedness Analysis}

In this section, by analyzing the boundedness of the system states,
the simplification of the close-loop dynamics is given.

The normal variables ${\dot q_{ri}}$, ${\ddot q_{ri}}$ and ${s_i}$ with respect to
the auxiliary variables ${\dot{\hat q}_{ri}}$, ${\ddot{\hat q}_{ri}}$ and
${\hat{s} _i}$, containing estimated states, are defined as
\begin{equation}\label{1.3}
\begin{array}{lll}
  {\dot q_{ri}} =
  J_i^\sharp (v_0 - {\alpha _i} (x_i - x_0)) + (I_{p_i} - J_i^\sharp {J_i}){\varphi _i},\\ \\
   {\ddot q_{ri}} =
  \dot J_i^\sharp (v_0 - {\alpha _i} (x_i - x_0))
  + J_i^\sharp ( a_0 - {\alpha _i} (\dot x_i - v_0) )
  + \frac{d}{dt}[ ( {I_{p_i}} - J_i^\sharp {J_i}){\varphi _i} ], \\ \\
  {s _i} = {\dot q}_i - {\dot q_{ri}},
\end{array}
\end{equation}
where the normal variables are formed based on the estimated variables
by replacing $\hat x_i$, $\hat v_i$ and $\hat a_i$ with $x_0$, $v_0$ and $a_0$ respectively. Then we define
\begin{equation}\label{1.0}
\begin{array}{lll}
  \dot{\tilde q}_{ri} &=& \dot{\hat q}_{ri} - \dot q_{ri} \\ \\
  &=& J_i^\sharp (\hat v_i - v_0 + {\alpha _i} ( \hat x_i - x_0)),
  \end{array}
\end{equation}
and
\begin{equation}\label{2.2}
\begin{array}{lll}
  \ddot{\tilde q}_{ri} &=& \ddot{\hat q}_{ri} - \ddot q_{ri} \\ \\
  &=& \dot J_i^\sharp (\hat v_i - v_0 + {\alpha _i} ( \hat x_i - x_0)) + J_i^\sharp (\hat a_i - a_0 + {\alpha _i} ( \hat v_i - v_0)).
  \end{array}
\end{equation}

Let $\tilde s_i = \hat s_i - s_i = - \dot{\tilde q}_{ri}$.
Substituting (\ref{1.2}) into (\ref{1.1}) yields the following close-loop dynamics
\begin{equation}\label{2.3}
\begin{array}{lll}
  {H_i}({q_i}){\dot s_i} + {C_i}({q_i},{{\dot q}_i}){s_i} + {d_i}(t)
  + J_i^T {\mathcal{K}_{xi}}J_i {s_i} + {\mathcal{K}_{si}} {s_i} + {\mathcal{K}_{ri}} {\rm sgn} ( s_i ) \\ \\
  ~~~~~= Y_i( {q_i},{\dot q}_i,{\dot q_{ri}},{\ddot q_{ri}} ){{\tilde \vartheta }_i} + {\tilde f}_i(t),
  \end{array}
\end{equation}
where ${{\tilde \vartheta }_i} = {{\hat \vartheta }_i} - \vartheta_i$ and
${\tilde f}_i(t) = {\hat H_i}({q_i}){\ddot{\tilde q}_{ri}} + {\hat C_i}({q_i},{{\dot q}_i}){\dot{\tilde q}_{ri}}
- J_i^T {\mathcal{K}_{xi}}J_i {\tilde s }_i - {\mathcal{K}_{si}} {\tilde s }_i
- {\mathcal{K}_{ri}} [{\rm sgn} ( \hat s_i ) - {\rm sgn} ( s_i )]$.
The boundedness of the states $q_i(t)$, $\dot q_i(t)$, $s_i(t)$, and
${\tilde f}_i(t)$, which will be used hereinafter, is analyzed in the following theorem.

\begin{theoremx}\label{t2}
Suppose that Assumptions A1 and A2 hold.
The distributed sliding-mode estimator (\ref{1.9})
guarantees that there exists a settle time $t_f \in (t_0,\infty)$
such that ${\tilde f}_i(t) = 0$ when $t \geq t_f$, $\forall i \in \mathcal{V}$.
Moreover, using the DCEA (\ref{1.2}) and (\ref{1.7}) for (\ref{1.1}),
the states $q_i(t)$, $\dot q_i(t)$, and $s_i(t)$ will remain bounded
for bounded initial values when $t \in {\mathcal Q}_f = [t_0,t_f)$.
\end{theoremx}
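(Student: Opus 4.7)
The plan is to split the argument into the two assertions of the theorem: finite-time convergence of the estimator (\ref{1.9}) giving $\tilde f_i(t)=0$ on $[t_f,\infty)$, and no finite-time escape of $q_i$, $\dot q_i$, $s_i$ on the bounded interval $[t_0,t_f)$.

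For the first assertion I would first observe that (\ref{1.9}) decomposes into three independent distributed protocols. Since $\sigma_{ij}=\zeta_i-\zeta_j$ inherits the stacked structure $(\hat x_i-\hat x_j,\hat v_i-\hat v_j,\hat a_i-\hat a_j)$ and the gain $\mathrm{diag}(\beta_1,\beta_2,\beta_3)\otimes I_m$ is block-diagonal, each of $\hat x_i$, $\hat v_i$, $\hat a_i$ obeys an independent protocol of the form
\[
\dot{\hat y}_i=-\beta\,\mathrm{sgn}\!\Big(\sum_{j\in\mathcal N_i}\varepsilon_{ij}(\hat y_i-\hat y_j)+b_i(\hat y_i-y_0)\Big),
\]
with $y\in\{x,v,a\}$ and $\beta\in\{\beta_1,\beta_2,\beta_3\}$ strictly exceeding $\|\dot y_0\|_\infty$ by A2. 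Setting $\tilde y_i=\hat y_i-y_0$, $\tilde y=\mathrm{col}(\tilde y_1,\ldots,\tilde y_n)$, and $\mathcal H=\mathcal L+\mathrm{diag}(b_1,\ldots,b_n)$, assumption A1 makes $\mathcal H$ a nonsingular $M$-matrix, so there is a positive vector $\xi$ with $\Xi\mathcal H+\mathcal H^T\Xi$ positive definite ($\Xi=\mathrm{diag}(\xi)$). Passing to Filippov solutions to cope with the discontinuity, I would take a nonsmooth Lyapunov candidate of the form $V=\|(\mathcal H\otimes I_m)\tilde y\|_1$ (or an equivalent weighted quadratic in $\mathcal H\tilde y$) and verify via Clarke's generalized gradient that the strict inequalities in A2 combined with the $M$-matrix structure deliver $\dot V\le-\eta$ whenever $V>0$, for some $\eta>0$. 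Finite-time convergence of each of the three layers follows, and $t_f$ is the largest of the three settling times across all nodes. Once $\hat x_i=x_0$, $\hat v_i=v_0$, $\hat a_i=a_0$ for $t\ge t_f$, formulas (\ref{1.0}) and (\ref{2.2}) give $\dot{\tilde q}_{ri}=\ddot{\tilde q}_{ri}=0$, hence $\tilde s_i=-\dot{\tilde q}_{ri}=0$ and $\mathrm{sgn}(\hat s_i)=\mathrm{sgn}(s_i)$, so every term in the definition of $\tilde f_i(t)$ just below (\ref{2.3}) vanishes.

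For the second assertion, since $\mathrm{sgn}(\cdot)$ is bounded, (\ref{1.9}) forces $\|\dot\zeta_i\|\le\sqrt{3m}\max\{\beta_1,\beta_2,\beta_3\}$, so $\zeta_i(t)$ remains bounded on the finite interval $[t_0,t_f]$. Combining this with A2, the bounded-kinematics convention of Remark 1, and a bounded $\varphi_i$, formulas (\ref{1.3}), (\ref{1.0}), (\ref{2.2}) make $\dot q_{ri}$, $\ddot q_{ri}$, $\dot{\tilde q}_{ri}$, $\ddot{\tilde q}_{ri}$, and $\tilde s_i$ bounded on $[t_0,t_f]$. I would then consider the Lyapunov candidate $V=\sum_{i\in\mathcal V}\!\big(\tfrac12 s_i^T H_i(q_i)s_i+\tfrac12\tilde\vartheta_i^T\mathcal T_i^{-1}\tilde\vartheta_i\big)$, invoke (P1) to cancel $s_i^T(\tfrac12\dot H_i-C_i)s_i$, substitute (\ref{2.3}) for $H_i\dot s_i+C_is_i$ and (\ref{1.8}) for $\dot{\tilde\vartheta}_i$, note that the $-s_i^TJ_i^T\mathcal K_{xi}J_is_i$, $-s_i^T\mathcal K_{si}s_i$, and $-s_i^T\mathcal K_{ri}\mathrm{sgn}(s_i)$ contributions are non-positive, and bound the remaining cross terms (including $s_i^T\tilde f_i$ and the mismatch $\tilde\vartheta_i^T[Y_i(\cdot,\dot q_{ri},\ddot q_{ri})^T s_i-Y_i(\cdot,\dot{\hat q}_{ri},\ddot{\hat q}_{ri})^T\hat s_i]$) with Cauchy--Schwarz and (P2) to obtain $\dot V\le c_1V+c_2$ on $[t_0,t_f)$ for constants $c_1,c_2$ depending only on the already-established bounds. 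The Comparison Lemma then yields boundedness of $V$ on $[t_0,t_f]$, hence of $s_i$ and $\tilde\vartheta_i$; finally $\dot q_i=s_i+\dot q_{ri}$ and $q_i(t)=q_i(t_0)+\int_{t_0}^t\dot q_i(\tau)\,d\tau$ are bounded on the finite interval.

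The main obstacle is the first part: because the right-hand side of (\ref{1.9}) is discontinuous, the finite-time decrease must be carried out inside the Filippov framework using Clarke calculus, and the estimate $\dot V\le-\eta$ only closes when the strict inequalities of A2 and the $M$-matrix structure of $\mathcal H$ under A1 are exploited simultaneously. A secondary subtlety in the second part is that the regressor of the adaptation law (\ref{1.8}) is evaluated at the hat variables while the one appearing in (\ref{2.3}) is evaluated at the normal variables, so the classical $\tilde\vartheta^T Y^T s$ cancellation is not exact; the residual is nevertheless bounded on a finite interval, which is all that is required to preclude escape here (asymptotic decay will be addressed in subsequent results once the estimator has settled).
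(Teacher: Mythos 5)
Your two-part architecture is exactly the paper's: finite-time convergence of the estimator (\ref{1.9}) forces $\sigma_i=0$, hence $\dot{\tilde q}_{ri}=\ddot{\tilde q}_{ri}=\tilde s_i=0$ and ${\tilde f}_i(t)=0$ for $t\ge t_f$; then one rules out finite escape of $q_i$, $\dot q_i$, $s_i$ on $[t_0,t_f)$, starting from the observation $\|\dot\zeta_i\|_\infty\le\max(\beta_1,\beta_2,\beta_3)$. The substantive difference is that the paper does not reprove the estimator result: it writes the error dynamics as a differential inclusion and invokes Theorem 3.1 of Cao--Ren--Meng \cite{Cao01}, which also delivers the explicit settling-time bounds. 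Your decomposition into three decoupled layers is correct, but the Lyapunov candidate you propose is the weak point of the self-contained route: with $V=\|(\mathcal H\otimes I_m)\tilde y\|_1$, or its $\Xi$-weighted variant, the decrease rate is governed by $\lambda_{\min}(\Xi\mathcal H+\mathcal H^T\Xi)$ while the drift contributed by $\dot y_0$ enters through the entries of $\Xi\mathcal B$, so closing $\dot V\le-\eta$ generally demands gains strictly larger than the bare inequalities $\beta_k>\sup\|\cdot\|_\infty$ supplied by A2. The settling-time expressions in the cited result (denominators $\beta_1-\sup\|v_0\|_\infty$, etc.) show that the sharp gain condition is obtained by a componentwise extremal argument, not by the $M$-matrix quadratic; if you want self-containment you need that finer argument.

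In the second part your instinct to replace the paper's qualitative ``bounded-for-bounded'' chain with a comparison-lemma estimate is sound, but the inequality $\dot V\le c_1V+c_2$ does not follow from (P2) and Cauchy--Schwarz as claimed. Since $\|C_i(q_i,\dot q_i)\|\le\bar c_i\|\dot q_i\|$ and $\dot q_i=s_i+\dot q_{ri}$ with $\dot q_{ri}$ bounded, the regressor $Y_i(q_i,\dot q_i,\dot q_{ri},\ddot q_{ri})$ and the perturbation ${\tilde f}_i$ grow linearly in $\|s_i\|$ (and $\hat C_i$ grows like $\|\hat\vartheta_i\|\,\|\dot q_i\|$), so the uncancelled terms $s_i^T{\tilde f}_i$ and the regressor-mismatch residual are cubic in $(\|s_i\|,\|\tilde\vartheta_i\|)$; the honest estimate is of the form $\dot V\le c_1V^{3/2}+c_2$, which by itself does not exclude finite escape. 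You correctly identified the mismatch between the adaptation law (\ref{1.8}), driven by $\dot{\hat q}_{ri},\ddot{\hat q}_{ri},\hat s_i$, and the closed loop (\ref{2.3}), written in $\dot q_{ri},\ddot q_{ri},s_i$, but the boundedness of that residual on $[t_0,t_f)$ is precisely what is in question and cannot simply be asserted. To be fair, the paper's own treatment of this step is no more rigorous: it only shows that the derivatives are bounded whenever the states are, which (compare $\dot y=y^2$) does not preclude finite-time blow-up, so this is a shared gap rather than a defect unique to your route.
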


\begin{proof}
For the first presentation, we prove that there exists a finite time
$t_f \in (t_0,\infty)$ such that ${\tilde f}_i(t) = 0$ when $t \geq t_f$.
Because the right-hand side of (\ref{1.9}) is discontinuous,
the differential inclusions and nonsmooth analysis are invoked for convergence analysis of (\ref{1.9}) \cite{Fischer,Paden}.
The error dynamics of (\ref{1.9}) can be rewritten as
\begin{equation*}
  {\dot \sigma_i} \in^{a.e.} \mathbb{K} \left\{ - \left( \left[ {\begin{array}{*{20}{lll}}
                            {{\beta _1}}&{}&{}\\
                            {}&{{\beta _2}}&{}\\
                            {}&{}&{{\beta _3}}
                            \end{array}} \right] \otimes I_{m} \right)
    {\rm sgn} \left( {\sum\limits_{j \in {\mathcal N}_i} {{\varepsilon_{ij}}\sigma_{ij}} } + b_i \sigma_i \right)
    - \left[ {\begin{array}{*{20}{lll}}
                            \dot x_0 \\
                            \dot v_0 \\
                            \dot a_0
                            \end{array}} \right] \right\},
\end{equation*}
where $a.e.$ stands for ``almost everywhere'' refer to almost all $t \in {\mathcal Q}$,
and $\mathbb{K}\{ \cdot \}$ denotes the differential inclusion \cite{Paden}.
By Theorem 3.1 in {\cite{Cao01}}, under Assumptions A1 and A2,
it can be obtained that $\sigma_i = 0$ when $t \geq t_f$ and $t_f$ is bounded by
\begin{equation*}
\begin{array}{lll}
  t_f \leq t_{f\max} := \max (t_{f1},t_{f2},t_{f3}) < \infty, \\ \\
  t_{f1} = t_0 + \frac{\max_{i \in \mathcal V} \| {\hat x}_i(t_0) - x_0(t_0) \|_\infty}{\beta_1 - \sup_{t \in \mathcal Q}\| v_0 \|_{\infty}}, \\ \\
  t_{f2} = t_0 + \frac{\max_{i \in \mathcal V} \| {\hat v}_i(t_0) - v_0(t_0) \|_\infty}{\beta_2 - \sup_{t \in \mathcal Q}\| a_0 \|_{\infty}}, \\ \\
  t_{f3} = t_0 + \frac{\max_{i \in \mathcal V} \| {\hat a}_i(t_0) - a_0(t_0) \|_\infty}{\beta_3 - \sup_{t \in \mathcal Q}\| \dot a_0 \|_{\infty}},
  \end{array}
\end{equation*}
where $\max(\cdot)$ and $\sup(\cdot)$ denote the maximal value and the supremum respectively.
It follows from (\ref{1.0}) and (\ref{2.2}) that $\dot{\tilde q}_{ri} = 0$ and $\ddot{\tilde q}_{ri} = 0$
when $t \geq t_f$.
Therefore, ${\tilde f}_i(t) = 0$ when $t \geq t_f$, $\forall i \in \mathcal{V}$.
\par
For the second presentation, using the DCEA (\ref{1.2}) and (\ref{1.7}) for (\ref{1.1}),
it is shown that the states $q_i(t)$, $\dot q_i(t)$, and $s_i(t)$ will remain bounded
for bounded initial values when $t \in {\mathcal Q}_f$.
Note that
\begin{equation*}
  \left\| \left[ {\begin{array}{*{20}{lll}}
                            {{\beta _1}}&{}&{}\\
                            {}&{{\beta _2}}&{}\\
                            {}&{}&{{\beta _3}}
                            \end{array}} \right] \otimes I_{m} \right\|_\infty \leq \max(\beta _1,\beta _2,\beta _3),
\end{equation*}
and
\begin{equation*}
  \left\| {\rm sgn} \left( {\sum\limits_{j \in {\mathcal N}_i} {{\varepsilon_{ij}}\sigma_{ij}} } + b_i \sigma_i \right) \right\|_\infty \leq 1.
\end{equation*}
It thus follows from (\ref{1.9}) that
\begin{equation*}
  \left\| {\dot \zeta_i} \right\|_\infty \leq \max(\beta _1,\beta _2,\beta _3), ~\forall i \in \mathcal V,
\end{equation*}
which means
\begin{equation*}
  \mathop {\sup }\limits_{{t \in {\mathcal Q}_f}}  \left\| \zeta_i(t) \right\|_\infty \leq \max(\beta _1,\beta _2,\beta _3)(t_f - t_0) < \infty, ~\forall i \in \mathcal V.
\end{equation*}
It follows that ${{\hat x}_i}(t)$, ${\hat v}_i(t)$, and ${\hat a}_i(t)$ remain bounded
for bounded initial values ${{\hat x}_i}(t_0)$, ${\hat v}_i(t_0)$, and ${\hat a}_i(t_0)$ when $t \in {\mathcal Q}_f$.
For bounded $q_i$ and $\dot q_i$, the kinematics in (\ref{1.1}) implies that $x_i$ and $\dot x_i$ remain bounded.
Then (\ref{1.5}) and (\ref{1.6}) implies that ${\dot{\hat q}_{ri}}$, ${\ddot{\hat q}_{ri}}$
and $\hat s_i$ remain bounded for bounded $q_i$ and $\dot q_i$.
By (\ref{1.3})-(\ref{2.2}), ${\dot q_{ri}}$, ${\ddot q_{ri}}$, $s_i$, ${\dot{\tilde q}_{ri}}$, ${\ddot{\tilde q}_{ri}}$,
and $\tilde s_i$ remain bounded for bounded $q_i$ and $\dot q_i$.
By Property P2 and Remark 1, $Y_i( {q_i},{\dot q}_i,{\dot{\hat q}_{ri}},{\ddot{\hat q}_{ri}} )$ are bounded for bounded $q_i$ and $\dot q_i$.
Then (\ref{1.8}) implies ${{\hat \vartheta }_i}(t)$ remain bounded when $t \in {\mathcal Q}_f$
for bounded initial value ${{\hat \vartheta }_i}(t_0)$.
It follows that ${\tilde f}_i(t)$ remain bounded for bounded $q_i$ and $\dot q_i$.
By (\ref{2.3}), ${\dot s_i}$ remains bounded for bounded $q_i$ and $\dot q_i$.
Therefore, we can get that for bounded initial values $q_i(t_0)$ and $\dot q_i(t_0)$,
the states $q_i(t)$, $\dot q_i(t)$, and $s_i(t)$ remain bounded when $t \in {\mathcal Q}_f$.
This completes the proof.
\end{proof}

\subsection{Convergence Analysis}

In this section, based on Theorem \ref{t2}, the convergence of the close-loop dynamics is analyzed.
Differential inclusion and Filippov solution \cite{Paden} are invoked because the presented DCEA are nonsmooth.

By Theorem \ref{t2}, $\hat s_i = s_i$ and ${\tilde f}_i(t) = 0$ when $t \in \bar{\mathcal Q}_f = [t_f, \infty)$.
Then when $t \in \bar{\mathcal Q}_f$, the combination of (\ref{1.7}) and (\ref{2.3}) yields the following cascade system
\begin{equation}\label{2.5}
  \left\{ \begin{array}{lll}
  {\dot s_i} = {H_i^{-1}}({q_i}) [ - {C_i}({q_i},{{\dot q}_i}){s_i} - J_i^T {\mathcal{K}_{xi}}J_i {s_i}
  - {\mathcal{K}_{si}} {s_i} \\
  ~~~~~~~~~~~~~~~~~ + Y_i( {q_i},{\dot q}_i,{\dot q_{ri}},{\ddot q_{ri}} ){{\tilde \vartheta }_i} + {\tilde h}_i(t)], \\ \\
   {x_i} = {\psi _i}({q_i}),~{{\dot x}_i} = {J_i} {{\dot q}_i}, \\ \\
  {{\dot {\hat \vartheta} }_i} = - {\mathcal T}_i Y_i^T( {q_i},{\dot q}_i,{\dot q_{ri}},{\ddot q_{ri}} ){s_i},
  \end{array} \right.
\end{equation}
where $q_i(t_f)$, $\dot q_i(t_f)$, $x_i(t_f)$, $\dot x_i(t_f)$, and $s_i(t_f)$ are bounded,
${\tilde h}_i(t) = - {d_i}(t) - {\mathcal{K}_{ri}} {\rm sgn} ( s_i )$
denotes the combination of the nonsmooth and uncertain terms in the system dynamics.
Then the following theorem is given.

\begin{theoremx}\label{t1}
Suppose that Assumptions A1 and A2 hold.
Using the DCEA (\ref{1.2}) and (\ref{1.7}) for (\ref{1.1}),
if $\lambda_{\min}(\mathcal{K}_{xi}) > 0$, $\lambda_{\min}(\mathcal{K}_{si}) > 0$,
$\lambda_{\min}(\mathcal{K}_{ri}) \geq \sup_{t \in \mathcal Q} \| d_i(t) \|$,
and $\lambda_{\min}({\mathcal T}_i) > 0$,
then the control tasks in this paper can be achieved,
namely, ${x_i} \to {x_0}$ and ${\dot{x}_i} \to {v_0}$ as $t \to \infty$,
$\forall i \in {\mathcal V}$; $e_{si} \to 0$ as $t \to \infty$,
$\forall i \in {\mathcal F}$.
\end{theoremx}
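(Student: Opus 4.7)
The plan is to use Theorem~\ref{t2} to restrict attention to $t\in\bar{\mathcal Q}_f=[t_f,\infty)$, on which the sliding-mode estimator has already converged ($\hat x_i=x_0$, $\hat v_i=v_0$, $\hat a_i=a_0$), so that $\hat s_i=s_i$, $\tilde f_i(t)=0$, and the closed loop collapses to the cascade (\ref{2.5}) with bounded initial data at $t_f$. On this reduced system I would adopt the composite Lyapunov candidate
\begin{equation*}
V(t)=\sum_{i\in\mathcal V}\Bigl[\tfrac{1}{2}\,s_i^T H_i(q_i)\,s_i+\tfrac{1}{2}\,\tilde\vartheta_i^T\mathcal T_i^{-1}\tilde\vartheta_i\Bigr],
\end{equation*}
which is positive definite in $(s_i,\tilde\vartheta_i)$ by Property~P1 and $\mathcal T_i\succ 0$.

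Differentiating $V$ along (\ref{2.5}), the quadratic-in-$s_i$ Coriolis term cancels with $\tfrac{1}{2}s_i^T\dot H_i s_i$ through the skew-symmetry of $\dot H_i-2C_i$ in P1, and the regressor cross-term $s_i^T Y_i\tilde\vartheta_i$ is killed by the adaptive law (\ref{1.8}). What remains is
\begin{equation*}
\dot V\le-\sum_{i\in\mathcal V}\Bigl[\lambda_{\min}(\mathcal K_{xi})\|J_i s_i\|^2+\lambda_{\min}(\mathcal K_{si})\|s_i\|^2\Bigr]+\sum_{i\in\mathcal V}s_i^T\bigl[-d_i(t)-\mathcal K_{ri}\,\mathrm{sgn}(s_i)\bigr].
\end{equation*}
The identity $s_i^T\mathrm{sgn}(s_i)=\|s_i\|_1\ge\|s_i\|$ together with the gain condition $\lambda_{\min}(\mathcal K_{ri})\ge\sup_t\|d_i(t)\|$ forces the second sum to be nonpositive, so $\dot V\le 0$. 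Hence $V$ is nonincreasing, the signals $s_i,\tilde\vartheta_i$ are bounded, and $\int_{t_f}^\infty\|s_i\|^2\,dt<\infty$; since the cascade (\ref{2.5}) and Property~P2 make $\dot s_i$ bounded, $\|s_i\|^2$ is uniformly continuous and Barbalat's lemma yields $s_i(t)\to 0$.

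It then remains to translate $s_i\to 0$ into the two control objectives, which is purely algebraic and uses only Lemma~\ref{l1}. Premultiplying $s_i=\dot q_i-\dot q_{ri}$ by $J_i$ and invoking $J_iJ_i^\sharp=I_m$ together with $J_i(I_{p_i}-J_i^\sharp J_i)=0$ yields the first-order perturbed linear identity $(\dot x_i-v_0)+\alpha_i(x_i-x_0)=J_i s_i$, which is exponentially ISS with vanishing input $J_i s_i$ (since $\alpha_i>0$ and $J_i$ is bounded by Remark~1), giving $x_i\to x_0$ and hence $\dot x_i\to v_0$. For the subtask error, substituting $\dot q_i=s_i+J_i^\sharp(v_0-\alpha_i(x_i-x_0))+(I_{p_i}-J_i^\sharp J_i)\varphi_i$ into $e_{si}=(I_{p_i}-J_i^\sharp J_i)(\dot q_i-\varphi_i)$ and applying $(I_{p_i}-J_i^\sharp J_i)J_i^\sharp=0$ and $(I_{p_i}-J_i^\sharp J_i)^2=I_{p_i}-J_i^\sharp J_i$ collapses everything to $e_{si}=(I_{p_i}-J_i^\sharp J_i)s_i$, so $e_{si}\to 0$ for every $i\in\mathcal F$.

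The step I expect to be most delicate is the handling of the discontinuous term $\mathcal K_{ri}\,\mathrm{sgn}(s_i)$: because $V$ is only absolutely continuous along Filippov solutions, $\dot V$ must be interpreted in the generalized (Clarke) sense, and the gain-dominance estimate has to be checked against the set-valued image of $\mathrm{sgn}(\cdot)$ on $\{s_i=0\}$. This requires the same differential-inclusion framework already used in Theorem~\ref{t2}; once that is in place, the cancellation identities and the Barbalat step are routine.
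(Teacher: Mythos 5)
Your proposal is correct and follows essentially the same route as the paper: reduce to the cascade (\ref{2.5}) on $[t_f,\infty)$ via Theorem~\ref{t2}, use the storage function $\tfrac12 s_i^T H_i s_i+\tfrac12\tilde\vartheta_i^T\mathcal T_i^{-1}\tilde\vartheta_i$ with skew-symmetry and the adaptive law to cancel cross-terms, dominate $-s_i^T d_i-s_i^T\mathcal K_{ri}\,\mathrm{sgn}(s_i)$ by the gain condition, conclude $s_i\to0$ by a (nonsmooth) Barbalat/LaSalle--Yoshizawa argument, and finish with the ISS relation $\dot e_i=-\alpha_i e_i+J_i s_i$ and the projector identities of Lemma~\ref{l1} for $e_{si}$. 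The only cosmetic differences are that the paper augments the Lyapunov function with $\eta_i\int_{t_f}^t s_i^TJ_i^TJ_i s_i\,d\omega$ to extract $J_i s_i\to0$ separately (which you obtain directly from $s_i\to0$ and bounded $J_i$) and invokes the nonsmooth LaSalle--Yoshizawa corollary rather than Barbalat, exactly the differential-inclusion caveat you already flag.
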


\begin{proof}
The proof proceeds in three steps.
First, the passivity of the close-loop dynamics (\ref{2.5}) is analyzed.
Second, the convergence of $s_i$ and $J_i s_i$ is obtained
using differential inclusions and nonsmooth analysis.
Finally, the stability of $x_i$ and $e_{si}$ is shown
invoking kinematic analysis of nonredundant and redundant manipulators.

The first presentation presents the passivity of the close-loop dynamics.
For system (\ref{2.5}), consider the following storage function
\begin{equation*}
{V_{i1}} = \frac{1}{2}(s _i^T{H_i}({q_i}){s _i} + \tilde \vartheta _i^T\mathcal{T}_i^{ - 1}{\tilde \vartheta _i} ).
\end{equation*}
Differentiating the storage function along (\ref{2.5}) gives
\begin{equation*}
\begin{array}{lll}
{\dot V}_{i1} &=& \frac{1}{2}s _i^T{{\dot H}_i}({q_i}){s _i} + s _i^T{H_i}({q_i}){{\dot s }_i}
+ \tilde \vartheta _i^T\mathcal{T}_i^{ - 1}{{\dot {\hat \vartheta }}_i} \\ \\
&=&  - s _i^T \mathcal{K}_{si} {s _i} - s _i^T J_i^T(q_i) \mathcal{K}_{xi} J_i(q_i) {s }_i + s _i^T {\tilde h}_i \\ \\
&\leq& s _i^T {\tilde h}_i,
\end{array}
\end{equation*}
where Property P1 is invoked to obtain the second equation.
Integrating both sides of the above inequality with respect to time $t \in \bar{\mathcal Q}_f$ provides
\begin{equation}\label{4.0}
{V_{i1}}(t) - {V_{i1}}({t_f}) \leq \int_{{t_f}}^t {s _i^T( \omega){\tilde h}_i(\omega )d\omega }.
\end{equation}
It thus follows that system (\ref{2.5}) is passive with the
mapping from the input ${\tilde h}_i(t)$ to the state $s_i$, $\forall i \in \mathcal{V},~t \in \bar{\mathcal Q}_f$ \cite{Khalil}.
\par
The second presentation analyze the convergence of $s_i$ and $J_i s_i$ using nonsmooth analysis
and the passivity property given in the first presentation.
Let $\eta_i \in (0,\lambda_{\min}(\mathcal{K}_{xi}))$ be a positive scalar.
Then when $t \in \bar{\mathcal Q}_f$, consider the Lyapunov function candidate for system (\ref{2.5}) as
\begin{equation*}
{V_i} = {V_{i1}} + \eta_i \int_{t_f}^t s_i^T(\omega) J_i^T(q_i(\omega)) J_i(q_i(\omega)) {s_i}(\omega) d\omega.
\end{equation*}
Considering that ${\tilde h}_i(t)$ is nonsmooth and uncertain,
the generalized time derivative is invoked to carry out more formal mathematical analysis;
additionally, considering that the signum function is measurable
and locally essentially bounded, the Filippov solution exists for (\ref{2.5}) \cite{Fischer,Paden}.
Taking the generalized time derivative of $V_i$ along (\ref{2.5}) gives that
\begin{equation*}
\begin{array}{lll}
\dot {\tilde V}_i &=& \mathbb{K}\{ \frac{1}{2} s _i^T{{\dot H}_i}({q_i}){s _i} + s _i^T[ - {C_i}({q_i},{{\dot q}_i}){s_i} - J_i^T {\mathcal{K}_{xi}}J_i {s_i} - {\mathcal{K}_{si}} {s_i} \\
&& ~~~+ Y_i( {q_i},{\dot q}_i,{\dot q_{ri}},{\ddot q_{ri}} ){{\tilde \vartheta }_i} + {\tilde h}_i(t)] - \tilde \vartheta _i^T Y_i^T( {q_i},{\dot q}_i,{\dot q_{ri}},{\ddot q_{ri}} ){s_i} \\
&& ~~~+ \eta_i s_i^T J_i^T J_i {s_i} \} \\ \\
&=&  \mathbb{K}\{ - s _i^T{\mathcal{K}_{si}}{s _i} - s _i^T J_i^T({\mathcal{K}_{xi}} - \eta_i I_{m})J_i s_i - s _i^T {\mathcal{K}_{ri}} {\rm sgn} ( s_i ) - s _i^T d_i \} \\ \\
&\leq& - s _i^T{\mathcal{K}_{si}}{s _i} - s _i^T J_i^T({\mathcal{K}_{xi}} - \eta_i I_{m})J_i s_i - \| s_i \| ( \lambda_{\min}(\mathcal{K}_{ri}) - \| d_i \|) \\ \\
&\leq& - s _i^T{\mathcal{K}_{si}}{s _i} - s _i^T J_i^T({\mathcal{K}_{xi}} - \eta_i I_{m})J_i s_i
- \| s_i \| ( \lambda_{\min}(\mathcal{K}_{ri}) - \sup_{t \in \mathcal Q} \| d_i(t) \|) \\ \\
&\leq& - s _i^T{\mathcal{K}_{si}}{s _i} - s _i^T J_i^T({\mathcal{K}_{xi}} - \eta_i I_{m})J_i s_i,
\end{array}
\end{equation*}
where (\ref{4.0}), $\lambda_{\min}(\mathcal{K}_{ri}) \geq \sup_{t \in \mathcal Q} \| d_i(t) \|$,
Property P1-P3, and the fact that $\mathbb{K} \{ f \} \equiv \{ f \}$ if $f$ is continuous \cite{Paden}
are invoked to obtain the above inequalities.
Provided $\lambda_{\min}(\mathcal{K}_{xi}) > 0$ and $\lambda_{\min}(\mathcal{K}_{si}) > 0$,
$\dot {\tilde V}_i$ is negative definite, $\forall i \in \mathcal V$.
It follows that ${V}_i \in {\mathcal L}_\infty$;
then $s _i, J_i s_i, \tilde \vartheta _i \in {\mathcal L}_\infty$, $\forall i \in \mathcal V$.
$\tilde \vartheta _i \in {\mathcal L}_\infty$ implies that $\hat \vartheta _i \in {\mathcal L}_\infty$
because $\vartheta _i$ is a vector of constant dynamic parameters, $\forall i \in \mathcal V$.
$s _i, J_i s_i \in {\mathcal L}_\infty$ gives that
$Y_i( {q_i},{\dot q}_i,{\dot q_{ri}},{\ddot q_{ri}} ) \in {\mathcal L}_\infty$, $\forall i \in \mathcal V$.
Then (\ref{1.2}) and (\ref{1.7}) imply that the control input $u_i$ is bounded.
Thus, the closed-loop dynamics (\ref{2.5}) implies that $\dot s_i \in {\mathcal L}_\infty$;
then $s_i$ is uniformly continuous, $\forall i \in \mathcal V$.
Then invoking Corollary 2 in \cite{Paden}, $s _i \to 0$ and $J_i s_i \to 0$ as $t \to \infty$, $\forall i \in \mathcal V$.
\par
The third presentation shows that $e_{si} \to 0$ $(\forall i \in \mathcal F)$ as $t \to \infty$ under the DCEA.
Let $e_i = x_i - x_0$. (\ref{1.3}) gives that
\begin{equation}\label{2.8}
  {\dot e_i} =  - {\alpha _i}{e_i} + {J_i} {s _i}.
\end{equation}
It thus follows from \cite{Khalil} that (\ref{2.8}) is input-to-state stable
with respect to the input ${J_i}({q_i}){s _i}$ and the state $e_i$;
hence, $J_i s_i \to 0$ as $t \to \infty$ implies that $e_i \to 0$ as $t \to \infty$, which means $x_i \to x_0$ and $\dot x_i \to v_0$ as $t \to \infty$, $\forall i \in \mathcal V$, $i.e.$, the maintask is addressed.
Additionally, by (\ref{1.3}) and Lemma 1, for the redundant manipulators, namely, for $i \in \mathcal F$,
\begin{equation*}
    \begin{array}{lll}
    {e_{si}} &=& ( I_{{p_i}} - J_i^ \sharp {J_i} )( \dot q_i - {\varphi _i} ) \\ \\
             &=& ( I_{{p_i}} - J_i^ \sharp {J_i} )[\dot q_i - J_i^\sharp (v_0 - {\alpha _i} (x_i - x_0))
             - (I_{p_i} - J_i^\sharp {J_i}){\varphi _i} ] \\ \\
             &=& ( I_{{p_i}} - J_i^ \sharp {J_i} )(\dot q_i - \dot q_{ri}) \\ \\
             &=& ( I_{{p_i}} - J_i^ \sharp {J_i} )s_i.
    \end{array}
\end{equation*}
Therefore, $s_i \to 0$ as $t \to \infty$ implies that ${e_{si}} \to 0$ as $t \to \infty$,
$\forall i \in\mathcal F$, $i.e.$, the subtask is addressed.
This completes the proof.
\end{proof}

Note that the necessary and sufficient condition can be obtained by
some simple transformation for Theorem \ref{t1}.
The proof of Corollary \ref{c1} can be easily obtained by contradiction and is omitted here.

\begin{corollaryx}\label{c1}
Suppose that $\lambda_{\min}(\mathcal{K}_{xi}) > 0$, $\lambda_{\min}(\mathcal{K}_{si}) > 0$,
$\lambda_{\min}(\mathcal{K}_{ri}) \geq \sup_{t \in \mathcal Q} \| d_i(t) \|$, $\lambda_{\min}({\mathcal T}_i) > 0$,
and Assumption A2 holds. Using (\ref{1.2}) and (\ref{1.7}) for (\ref{1.1}),
the control tasks are achieved, namely, ${x_i} \to {x_0}$ and ${\dot{x}_i} \to {v_0}$ as $t \to \infty$,
$\forall i \in {\mathcal V}$; $e_{si} \to 0$ as $t \to \infty$, $\forall i \in {\mathcal F}$, if and only if Assumption A1 holds.
\end{corollaryx}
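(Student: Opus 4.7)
The claim splits into sufficiency (Assumption A1 $\Rightarrow$ tasks achieved) and necessity. Sufficiency is nothing more than a restatement of Theorem~\ref{t1}, whose hypotheses are all in force, so no new work is needed there.

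For necessity I would argue by contradiction. Suppose that under (\ref{1.2})--(\ref{1.7}) the tasks are achieved for every bounded initial condition and every leader satisfying Assumption A2, and yet A1 fails. Then the set $\mathcal V_u\subseteq\mathcal V$ of nodes to which the leader has no directed path under $\mathcal B$ is nonempty, and two graph-theoretic facts are immediate: (i)~$b_i=0$ for every $i\in\mathcal V_u$ (otherwise the leader pins $i$ directly, a direct path of length one), and (ii)~$\mathcal N_i\subseteq\mathcal V_u$ for every $i\in\mathcal V_u$ (otherwise any leader-path to a neighbour $j\in\mathcal N_i$ extends to a leader-path to $i$). Reading (\ref{1.9}) for $i\in\mathcal V_u$, the term $b_i\sigma_i$ vanishes and the sum runs only over $j\in\mathcal V_u$, so the estimator subsystem $\{\dot\zeta_i\}_{i\in\mathcal V_u}$ depends neither on the leader's states nor on any manipulator state: it is completely autonomous.

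The contradiction is now delivered by a symmetry argument. Run the closed-loop system (\ref{1.1}), (\ref{1.2}), (\ref{1.7}) twice, with leaders $(x_0,v_0,a_0)\equiv(0,0,0)$ and $(x_0,v_0,a_0)\equiv(c,0,0)$ respectively, where $c\in\mathbb R^{m}$ is any nonzero constant (both choices clearly compatible with Assumption A2), using identical bounded initial values $\zeta_i(t_0),q_i(t_0),\dot q_i(t_0),\hat\vartheta_i(t_0)$ in the two runs. Because the estimator subsystem on $\mathcal V_u$ is autonomous, $\zeta_i(t)$ for $i\in\mathcal V_u$ is identical in both runs, and because the controller (\ref{1.2}) of an unreachable node depends only on $q_i,\dot q_i,\zeta_i,\hat\vartheta_i$ and not directly on the leader, the manipulator trajectory $x_i(t)$ for $i\in\mathcal V_u$ is also identical in the two runs. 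But the maintask demands $x_i(t)\to 0$ in the first run and $x_i(t)\to c$ in the second, which is impossible for a single trajectory. Hence A1 must hold. The only non-routine step in the whole plan is verifying that the estimator and control loops of an unreachable node genuinely decouple from the leader, which rests entirely on (i) and (ii); once that decoupling is in hand the symmetry argument supplies the contradiction essentially for free, matching the authors' remark that the proof is ``easily obtained by contradiction''.
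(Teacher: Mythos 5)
The paper omits the proof of Corollary~\ref{c1}, saying only that it ``can be easily obtained by contradiction,'' and your argument is a correct instantiation of exactly that: sufficiency is a restatement of Theorem~\ref{t1}, and necessity follows from your observations (i)--(ii) that the closed loop of every leader-unreachable node decouples from the leader, combined with the two-constant-leader symmetry argument. The only point worth flagging is that the ``only if'' direction requires reading ``the control tasks are achieved'' as quantified over all leaders satisfying A2 (or over all bounded initial data) --- otherwise a degenerate unreachable node could accidentally track a particular leader --- and you correctly make that quantification explicit.
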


It is worthy to point out that the DCEA (\ref{1.2}) and (\ref{1.7}) can also deal with the consensus problem,
namely, leaderless coordination, presented in \cite{Guan01} by some simple changes.
\begin{corollaryx}\label{c2}
Suppose that $\mathcal G$ contains a spanning tree and Assumption {A2} holds. Let estimators (\ref{1.9}) be replaced by
\begin{equation*}
  {\dot \zeta_i} = - \left( \left[ {\begin{array}{*{20}{lll}}
                            {{\beta _1}}&{}&{}\\
                            {}&{{\beta _2}}&{}\\
                            {}&{}&{{\beta _3}}
                            \end{array}} \right] \otimes I_{m} \right)
    {\rm sgn} \left( {\sum\limits_{j \in {\mathcal N}_i} {{\varepsilon_{ij}}\sigma_{ij}} } \right).
\end{equation*}
Using (\ref{1.2}), (\ref{1.8}) and the above estimators for the MHMs (\ref{1.1}), if $\lambda_{\min}(\mathcal{K}_{xi}) > 0$, $\lambda_{\min}(\mathcal{K}_{si}) > 0$, $\lambda_{\min}(\mathcal{K}_{ri}) \geq \sup_{t \in \mathcal Q} \| d_i(t) \|$, and $\lambda_{\min}({\mathcal T}_i) > 0$, then consensus in \cite{Guan01} is achieved, meanwhile, the subtasks are obtained for $i \in {\mathcal F}$.
\end{corollaryx}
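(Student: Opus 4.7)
The plan is to mirror the two-stage argument of Theorems~\ref{t2}--\ref{t1}, with leader-tracking replaced by leaderless consensus of the estimators. I would first show that under the modified estimator and the spanning-tree assumption, the disagreement variables $\sigma_{ij}$ are driven to zero in finite time, so that all $\zeta_i$ coincide with a common signal $\zeta^{\ast}={\rm col}(x^{\ast},v^{\ast},a^{\ast})$ after some $t_f<\infty$. This $\zeta^{\ast}$ then plays the role of the leader $(x_0,v_0,a_0)$ in Theorem~\ref{t1}, and the remaining boundedness, passivity, and ISS machinery carries over essentially verbatim.

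For the finite-time step I would rewrite the modified estimator as the Filippov inclusion
\[
\dot\zeta_i\in^{a.e.}-\mathbb{K}\left\{\left({\rm diag}(\beta_1,\beta_2,\beta_3)\otimes I_m\right){\rm sgn}\left(\sum\nolimits_{j\in\mathcal N_i}\varepsilon_{ij}\sigma_{ij}\right)\right\},
\]
and then apply a sliding-mode consensus result (in the spirit of Theorem~3.1 in \cite{Cao01}) adapted from the leader-follower setting to a spanning-tree digraph by a layered induction from the root. After $t_f$ the system lies on the consensus manifold with no exogenous forcing, so the Filippov equivalent dynamics select a $\zeta^{\ast}$ whose three blocks are bounded with bounded derivatives; the uniform bound $\|\dot\zeta_i\|_\infty\le\max(\beta_1,\beta_2,\beta_3)$ already established inside the proof of Theorem~\ref{t2} gives the analogue of Assumption~A2 that subsequent arguments need.

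Given $\zeta^{\ast}$, the boundedness of $q_i,\dot q_i,s_i,\tilde f_i$ on $[t_0,t_f]$ follows exactly as in the second half of Theorem~\ref{t2}, using the same uniform bound on $\dot\zeta_i$. For $t\ge t_f$ I would redefine $\dot q_{ri},\ddot q_{ri},s_i$ in (\ref{1.3}) by substituting $(x^{\ast},v^{\ast},a^{\ast})$ for $(x_0,v_0,a_0)$; the cascade (\ref{2.5}) is then structurally unchanged, and the storage function $V_{i1}$ together with the augmented Lyapunov function $V_i$ from Theorem~\ref{t1} yield, by the same nonsmooth computation, $s_i\to 0$ and $J_is_i\to 0$. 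The ISS identity $\dot e_i=-\alpha_i e_i+J_is_i$ with $e_i=x_i-x^{\ast}$ then delivers $x_i\to x^{\ast}$, i.e., consensus in the sense of \cite{Guan01}; and the factorization $e_{si}=(I_{p_i}-J_i^\sharp J_i)s_i$ from the third part of the proof of Theorem~\ref{t1} gives $e_{si}\to 0$ for $i\in\mathcal F$.

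The principal obstacle is the first step: justifying finite-time sliding-mode consensus under only the spanning-tree hypothesis, with no pinning term $b_i\sigma_i$ available to supply a positive-definite quadratic Lyapunov function of the form used in Theorem~\ref{t2}. One would have to argue layer by layer along a directed spanning tree, propagating finite-time convergence outward from the root and handling the non-strict Lyapunov decrement produced by the disagreement variables alone. Everything else collapses to minor notational adaptations of proofs already in the paper.
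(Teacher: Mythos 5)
The paper contains no proof of Corollary~\ref{c2} (it is asserted to follow ``by some simple changes''), so your proposal has to stand on its own. Its architecture --- finite-time consensus of the estimator states to a common $\zeta^{\ast}$, then re-running Theorem~\ref{t1} with $\zeta^{\ast}$ as a virtual leader --- is the natural route, and you are right to flag the first step as a genuine obstacle: Theorem~3.1 of \cite{Cao01} is a leader--follower result, the known finite-time results for the leaderless protocol $\dot\zeta_i=-\beta\,{\rm sgn}\bigl(\sum_{j}\varepsilon_{ij}\sigma_{ij}\bigr)$ require strongly connected and (detail-)balanced digraphs rather than a mere spanning tree (the paper itself hints at this in the sentence following Corollary~\ref{c3}), and your ``layered induction from the root'' does not work as stated because a follower's signum argument aggregates \emph{all} of its in-neighbors, not only those in earlier layers, so convergence cannot be propagated outward one layer at a time. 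Worse, a root with no in-neighbors has $\sum_j\varepsilon_{ij}\sigma_{ij}\equiv 0$, so its Filippov dynamics are the inclusion $\dot\zeta_i\in[-\beta_k,\beta_k]$ with wildly non-unique solutions; the same non-uniqueness afflicts the whole network once it reaches the consensus manifold, so ``the Filippov equivalent dynamics select a $\zeta^{\ast}$'' needs an actual argument.

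The more damaging gap is one you do not flag: even granting finite-time consensus, $\zeta^{\ast}={\rm col}(x^{\ast},v^{\ast},a^{\ast})$ is produced by three \emph{decoupled} signum systems, so nothing forces the integrator-chain relations $\dot x^{\ast}=v^{\ast}$, $\dot v^{\ast}=a^{\ast}$ that the true leader satisfies by definition. Those relations, not merely the bounds in Assumption~A2, are what the machinery of Theorems~\ref{t2} and \ref{t1} actually uses: they are needed to identify $\ddot q_{ri}$ in (\ref{1.3}) with the genuine time derivative of $\dot q_{ri}$ when forming the closed loop (\ref{2.3}), and to cancel $\dot x_0$ against $v_0$ in deriving the ISS relation (\ref{2.8}). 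With $\zeta^{\ast}$ in place of the leader one gets $\dot e_i=-\alpha_ie_i+J_is_i+(v^{\ast}-\dot x^{\ast})$ and an unmatched residual of the form $H_iJ_i^{\sharp}[(a^{\ast}-\dot v^{\ast})+\alpha_i(v^{\ast}-\dot x^{\ast})]$ in the $s_i$-dynamics; the latter is not dominated by $\mathcal K_{ri}$ under the stated condition $\lambda_{\min}(\mathcal K_{ri})\ge\sup_t\|d_i\|$, and the former drives $x_i$ toward $\alpha_i$-dependent limits (already for a directed path with constant $\zeta^{\ast}$ one finds $x_i\to x^{\ast}+v^{\ast}/\alpha_i$), so even the weaker claim $x_i-x_j\to 0$ fails when the $\alpha_i$ differ. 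The derivative bound $\|\dot\zeta^{\ast}\|_\infty\le\max(\beta_1,\beta_2,\beta_3)$ also does not give boundedness of $\zeta^{\ast}$ on $[t_f,\infty)$. So ``carries over essentially verbatim'' is not justified; a correct proof must show the sliding dynamics force $\zeta^{\ast}$ to be constant (or otherwise control these residuals) and must either take all $\alpha_i$ equal or add an argument for agreement of the task-space positions.
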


Let a switching topology $\mathcal{G}(t) =  \{\mathcal{V},{\xi}(t),\mathcal{A}(t) \}$ be the system interaction.
$\mathcal{A}(t) = \left[ { \varepsilon_{ij}}(t) \right]_{n \times n}$ is the weighted adjacency matrix.
$\mathcal{B}(t) = [{b_1(t)},\ldots,{b_n(t)}]^T$ is the time-varying
nonnegative weight vector between the $n$ nodes and the leader. We can obtain the following corollary.

\begin{corollaryx}\label{c3}
Suppose that Assumption A2 holds and the leader is reachable to the MHMs under $\mathcal{G}(t)$ and $\mathcal{B}(t)$.
Let (\ref{1.9}) be replaced by
\begin{equation*}
  {\dot \zeta_i} = - \left( \left[ {\begin{array}{*{20}{lll}}
                            {{\beta _1}}&{}&{}\\
                            {}&{{\beta _2}}&{}\\
                            {}&{}&{{\beta _3}}
                            \end{array}} \right] \otimes I_{m} \right)
    {\rm sgn} \left( {\sum\limits_{j \in {\mathcal N}_i} {{\varepsilon_{ij}(t)}\sigma_{ij}} } + {b_i(t)}\sigma_i \right).
\end{equation*}
Using (\ref{1.2}), (\ref{1.8}) and the above estimators for the MHMs (\ref{1.1}), if $\lambda_{\min}(\mathcal{K}_{xi}) > 0$, $\lambda_{\min}(\mathcal{K}_{si}) > 0$, $\lambda_{\min}(\mathcal{K}_{ri}) \geq \sup_{t \in \mathcal Q} \| d_i(t) \|$, and $\lambda_{\min}({\mathcal T}_i) > 0$,
then ${x_i} \to {x_0}$ and ${\dot{x}_i} \to {v_0}$ as $t \to \infty$,
$\forall i \in {\mathcal V}$; $e_{si} \to 0$ as $t \to \infty$, $\forall i \in {\mathcal F}$.
\end{corollaryx}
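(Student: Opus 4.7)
The plan is to carry the proof through in three stages that mirror the argument for Theorem~2 and Theorem~1, the only substantive change being in the first stage, where finite-time convergence of the sliding-mode estimator block must be re-established for the time-varying interaction graph $\mathcal{G}(t)$ and pinning vector $\mathcal{B}(t)$. Once that is in hand, every subsequent calculation is \emph{local in time} (it uses the value of $\sigma_i(t)$ and the current state, not the structure of the communication graph), so it transfers verbatim from Theorem~2.

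First I would write down the error dynamics of the modified estimator as a differential inclusion of the same form as in the proof of Theorem~1, namely
\begin{equation*}
\dot\sigma_i \in^{a.e.} \mathbb{K}\!\left\{-\!\left(\mathrm{diag}(\beta_1,\beta_2,\beta_3)\otimes I_m\right)\mathrm{sgn}\!\left(\sum_{j\in\mathcal{N}_i(t)}\varepsilon_{ij}(t)\sigma_{ij}+b_i(t)\sigma_i\right)-\mathrm{col}(\dot x_0,\dot v_0,\dot a_0)\right\},
\end{equation*}
and argue that under Assumption~A2 and the standing reachability hypothesis on $(\mathcal{G}(t),\mathcal{B}(t))$ this inclusion is driven to the sliding surface $\sigma_i=0$ in finite time. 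The argument is componentwise: because the three blocks of $\sigma_i$ are decoupled in~\eqref{1.9} and each component of $\mathrm{col}(\dot x_0,\dot v_0,\dot a_0)$ is dominated by the corresponding $\beta_k$, one can apply the switching-topology extension of the Cao--Ren finite-time consensus result component by component, exactly as in the fixed-topology case, to produce a finite settling time $t_f<\infty$ such that $\sigma_i(t)=0$ for all $t\ge t_f$ and all $i\in\mathcal{V}$. From (\ref{1.0}) and (\ref{2.2}) this immediately gives $\dot{\tilde q}_{ri}=\ddot{\tilde q}_{ri}=0$ and therefore $\tilde f_i(t)=0$ on $[t_f,\infty)$.

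Second, for the boundedness on $[t_0,t_f)$ I would reuse the argument of Theorem~1 without modification: the estimator output satisfies $\|\dot\zeta_i\|_\infty\le\max(\beta_1,\beta_2,\beta_3)$ because the signum term is still bounded by $1$ regardless of the time variation of $\varepsilon_{ij}(t)$ and $b_i(t)$, so $\zeta_i$ grows at most linearly on the finite interval $[t_0,t_f)$, and the chain of implications \emph{bounded $\hat x_i,\hat v_i,\hat a_i\Rightarrow$ bounded $\dot{\hat q}_{ri},\ddot{\hat q}_{ri}\Rightarrow$ bounded $Y_i\Rightarrow$ bounded $\hat\vartheta_i\Rightarrow$ bounded $\tilde f_i\Rightarrow$ bounded $\dot s_i$} carries through unchanged. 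Third, for $t\ge t_f$ the closed-loop dynamics reduce to the cascade~\eqref{2.5}, which depends neither on the topology nor on its time-variation; the passivity-based storage function $V_{i1}$ and the Lyapunov function $V_i$ from the proof of Theorem~2 therefore yield $s_i\to 0$ and $J_i s_i\to 0$, and then the input-to-state-stable error equation~\eqref{2.8} and the identity $e_{si}=(I_{p_i}-J_i^\sharp J_i)s_i$ give $x_i\to x_0$, $\dot x_i\to v_0$ for all $i\in\mathcal{V}$ and $e_{si}\to 0$ for $i\in\mathcal{F}$.

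The main obstacle is the finite-time estimator step: the Cao--Ren result invoked in the proof of Theorem~1 is stated for a fixed digraph with a spanning tree from the leader, and one must check that the same conclusion survives when the graph switches. The natural way to handle this is to exhibit a common nonsmooth Lyapunov function of max-type on the componentwise errors $\hat x_i-x_0$, $\hat v_i-v_0$, $\hat a_i-a_0$, whose upper-right Dini derivative, evaluated in the sense of Filippov along any admissible switching signal, is bounded above by $-(\beta_k-\|{\cdot}\|_\infty)$ as long as a root with directed path to every node exists at the current instant; an alternative is to invoke instantaneous reachability to reduce to a fixed-graph argument on each interval where $\mathcal{G}(t)$ is constant and then concatenate. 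Either route requires care with the measurability and essential boundedness of the signum selection, which is why nonsmooth/Filippov analysis is the appropriate framework, exactly as in the proofs above.
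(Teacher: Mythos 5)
Your proposal is correct and follows essentially the same route as the paper: the paper's own proof is a one-line remark that the result follows by combining Lemma~6 of \cite{Wang03} (precisely the switching-topology finite-time sliding-mode consensus result you identify as the only step needing modification) with Theorem~\ref{t1}, i.e.\ the same decomposition into a finite-time estimator stage followed by the unchanged boundedness, passivity/Lyapunov, and ISS arguments. The only difference is that the paper outsources the switching-graph estimator convergence to that cited lemma, whereas you sketch how one would establish it directly via a common nonsmooth Lyapunov function or by concatenating fixed-graph intervals.
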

\begin{proof}
The proof can be easily developed by the combination of Lemma 6 in \cite{Wang03}
and Theorem \ref{t1}, and is omitted here.
\end{proof}

Assuming that $\mathcal G$ is balanced connected or detail balanced,
other conditions are identical, the above results still hold,
meanwhile, the subtasks are obtained for all $i \in {\mathcal F}$.

\begin{remark}
The DCEA (\ref{1.2}) and (\ref{1.7}) provide a distributed control framework for various multi-agent networks.
Note that (\ref{1.2}) and $(\ref{1.8})$ deal with the tracking of the estimated value
of the $i$th manipulator using its states (local states), and are called local algorithms,
while (\ref{1.9}) deals with the estimation of the states of the leader using both
the states of the $i$th manipulator and its neighbour set, and is called networked algorithms.
Thus, the DCEA consists of local algorithms and networked algorithms,
which can be easily applied to various nonlinear multi-agent networks by designing local algorithms and networked algorithms separately.
\end{remark}

\begin{remark}
Note that Assumption A1 is less conservative than many other assumptions in recent works.
For details, let $\bar {\mathcal G}$ be the augmentation digraph by adding the leader and the weight vector $\mathcal B$ to $\mathcal{G}$, Assumption A1 means that $\bar {\mathcal G}$ contains a spanning tree, which is less conservative than the following assumptions for graphs, including strong connected {\cite{Liu02,Wang01,Wang02}}, balanced connected {\cite{Liu01}}, detail balanced {\cite{LiuLam}},
which means, less consumption is required to establish and maintain Assumption A1.
\end{remark}

\begin{remark}
Assumption A1 implies that the presented DCEA have the following advantages.
(1) If ${\mathcal{G}}$ contains one or more edges, then only partial nodes are required to be pinned,
which reflects the superiority of networked control.
(2) By weighing the cost of maintaining interaction and pinning nodes, the interaction topology can be optimized.
\end{remark}

\begin{remark}
It is worthy to point out that the leader considered in Theorem \ref{t1} is a bounded time-varying signal,
which is necessary and inevitable in some natural and man-made systems,
$e.g.$, humanoid hands and multi-fingered hands {\cite{fmg03,Gueaieb01}}.
\end{remark}

\section{Simulations}

In this section, simulations are performed to show the effectiveness of the presented DCEA.
The MHMs contain five two-DOF and two three-DOF planar manipulators, as shown in Fig.\ref{Fig.Mod},
and the physical parameters are given in Tab.\ref{tab.0}.
The dynamics and kinematics of the MHMs are adopted from \cite{Wang01}.
The input disturbance $d_i(t)$ is a stochastic signal, whose $\infty$-norm is bounded by 40.
The maintask is to obtain the tracking of the leader
\begin{equation}\label{0.5}
  {x_0}(t) = \left[ {\begin{array}{*{20}{lll}}
                    {1.2 + 0.5\sin (\pi t)}\\
                    {1.3 + 0.3\cos (\pi t)}
                    \end{array}} \right]
\end{equation}
for the end-effectors in the \emph{XY} plane.
The MHMs are interacted invoking the digraph $\mathcal G$ shown in Fig.\ref{Fig.Com},
where node 0 is the leader, nodes 1-5 are two-DOF nonredundant manipulators, nodes 6 and 7 are three-DOF redundant manipulators,
and nodes 1, 3, 5, and 7 can access the information of the leader directly.
For simplification, if node $i$ can access the information of node $j$ directly,
$\varepsilon_{ij} = 1$; otherwise $\varepsilon_{ij} = 0$, $\forall i,j = 1,\ldots,7$.
${\mathcal B} = {\rm col} (1,0,1,0,1,0,1)$. The Laplacian matrix with respect to the digraph $\mathcal G$ is given by
\begin{equation*}
  {\mathcal L} =
  \left[ {\begin{array}{*{20}{c}}
{\begin{array}{*{20}{c}}
\begin{array}{l}
0\\
 - 1\\
0\\
 - 1\\
0\\
0\\
0
\end{array}&\begin{array}{l}
0\\
2\\
0\\
0\\
0\\
0\\
0
\end{array}&\begin{array}{l}
0\\
 - 1\\
0\\
 - 1\\
0\\
0\\
0
\end{array}&\begin{array}{l}
0\\
0\\
0\\
2\\
0\\
0\\
0
\end{array}
\end{array}}&\begin{array}{l}
0\\
0\\
0\\
0\\
0\\
 - 1\\
0
\end{array}&\begin{array}{l}
0\\
0\\
0\\
0\\
0\\
1\\
0
\end{array}&\begin{array}{l}
0\\
0\\
0\\
0\\
0\\
0\\
0
\end{array}
\end{array}} \right].
\end{equation*}
\par
The elements of $q_i(0)$, $\dot q_i(0)$, and $\zeta_i(0)$ are randomly chosen from $[-5,5]$.
The elements of ${\hat \vartheta}_i(0)$ are randomly selected from $[0,5]$.
The control parameters for the DCEA (\ref{1.2}) and (\ref{1.7}) are given by
\begin{equation*}
\begin{array}{lll}
  \beta_1 = 4,~\beta_2 = 7,~\beta_3 = 21; \\
  {\alpha _i} = 3,~\mathcal{T}_i = 0.1 I,~{\mathcal{K}_{xi}} = diag \{ 50,50 \},~\forall i = 1,\ldots,7; \\
  {\mathcal{K}_{si}} = diag \{ 100,100 \},~{\mathcal{K}_{ri}} = diag \{ 60,60 \},~\forall i = 1,\ldots,5; \\
  {\mathcal{K}_{si}} = diag \{ 150,150,150 \},~{\mathcal{K}_{ri}} = diag \{ 60,60,60 \},~\forall i = 6,7,
  \end{array}
\end{equation*}
where $I$ denotes the identity matrices of appropriate orders. The sampling period is selected as 10 ms.

\begin{table}[h]
 \centering\small
  \begin{tabular}{cccccc}
  \hline
  \hline
   $i$-th node & $m_i$~($kg$) & $l_i$~($m$) & $r_i$~($m$) & $I_i$~($kg\cdot m^2$) \\
  \hline
   $i = 1$ & 0.8, 0.6 & 1.4, 0.9　& 0.8, 0.45 & 6, 3 \\
   $i = 2$ & 1, 0.8 & 1.2, 1.1　& 0.7, 0.5 & 2, 3 \\
   $i = 3$ & 0.5, 0.8 & 1.1, 1.3　& 0.4, 0.6 & 5, 3 \\
   $i = 4$ & 1.5, 0.8 & 1.1, 1.2　& 0.6, 0.6 & 5, 4 \\
   $i = 5$ & 2.3, 0.8 & 1.0, 1.2　& 0.4, 0.7 & 5, 3 \\
   $i = 6$ & 0.8, 1.2, 1.4 & 0.8, 1.1, 1.4　& 0.4, 0.5, 0.7 & 4, 6, 5 \\
   $i = 7$ & 1.8, 1.2, 1.4 & 1, 1.1, 1.2　& 0.6, 0.6, 0.6 & 5, 6, 5 \\
  \hline
 \end{tabular}
 \caption{The physical parameters of the MHMs. }\label{tab.0}
\end{table}

\begin{figure}[H]
  \centering
  \includegraphics[width=9cm]{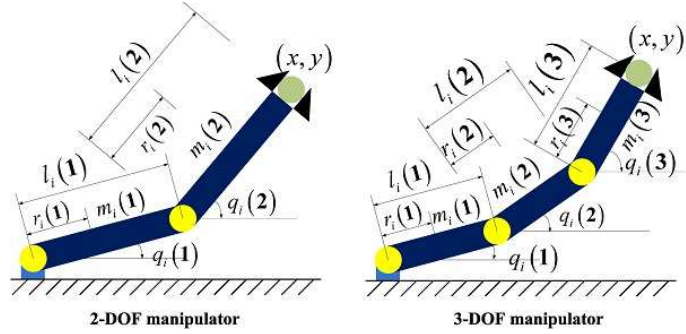}\\
  \caption{The mechanical structure of the individuals in the MHMs.}\label{Fig.Mod}
\end{figure}

\begin{figure}[H]
  \centering
  \includegraphics[width=6cm]{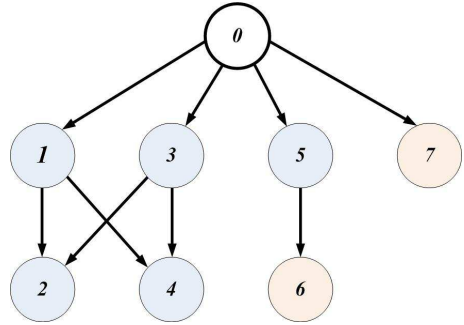}\\
  \caption{The interaction of MHMs $\mathcal G$ .}\label{Fig.Com}
\end{figure}

Simulation results of the presented DCEA are shown in Fig.\ref{Fig.1}-\ref{Fig.4}.
(\ref{0.5}) implies that $\| v_0 \|_\infty \leq 1.571$, $\| a_0 \|_\infty \leq 4.935$, and $\| \dot a_0 \|_\infty \leq 15.504$.
It thus follows from Fig.\ref{Fig.Com} that Assumptions A1 and A2 holds in this simulation.
Then the estimated states ${\hat x}_i$, ${\hat v}_i$ and ${\hat a}_i$ follows the states of the leader in finite time,
as shown in Fig.\ref{Fig.1}.
It gives that $\tilde f_i (t)$ converges to the origin in finite time and Theorem \ref{t2} holds, which means
the distributed sliding-mode estimator (\ref{1.9}) can drive the system dynamics into the cascade form (\ref{2.5}).
Moreover, the end-effectors of the manipulators follows the leader asymptotically because
$\lambda_{\min}(\mathcal{K}_{xi}) > 0$, $\lambda_{\min}(\mathcal{K}_{si}) > 0$
and $\lambda_{\min}(\mathcal{K}_{ri}) \geq \sup_{t \in \mathcal Q} \| d_i(t) \|$,
as shown in Fig.\ref{Fig.2} (the trajectories of the states in time-domain) and Fig.\ref{Fig.3}
(the trajectories of the states in \emph{XY} plane).
It means that the maintask can be accomplished under the DCEA in the simulations.
The subtask function for the sixth manipulator (node 6) is selected as
$\varphi_6 = {\rm col} [0,9 (1 - q_2),0]$ with respect to the performance indices $4.5 (q_2 - 1)^2$,
where $q_2$ denotes the second joint angle of the sixth manipulator.
This subtask function forces the second joint of the sixth manipulator toward 1 $rad$.
For Fig.\ref{Fig.5}, in the left picture, the second joint of the sixth manipulator is forced toward 1 $rad$
with subtask control; in the right picture, the subtask cannot be accomplished without subtask control.
The subtask function for the seventh manipulator is $\varphi_7 = \frac{\partial }{{\partial q}}\left( {\det ({J_7}J_7^T)} \right)$,
where $q$ denotes the joint position of the seventh manipulator.
This subtask function increases the manipulability of the seventh manipulator, as shown in \cite{Hsu}.
It can be observed in Fig.\ref{Fig.4} that the manipulability of the seventh manipulator is enhanced with subtask control.
It means that the subtask can also be accomplished under the DCEA in the simulations.

\begin{figure}[H]
  \centering
  \includegraphics[width=14cm]{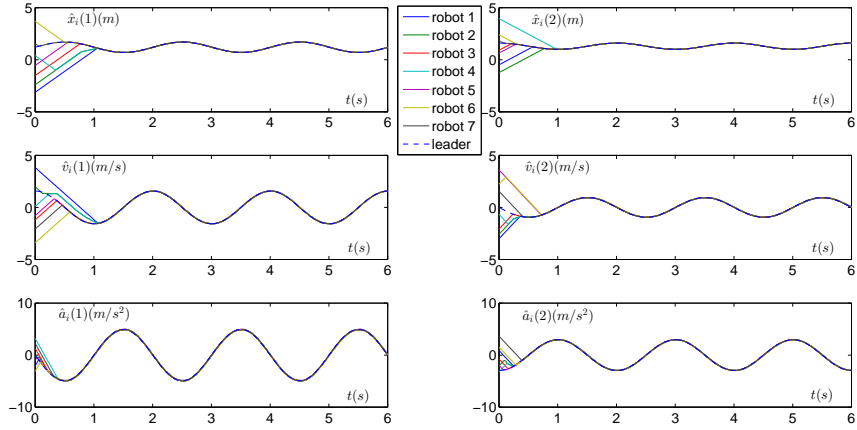}\\
  \caption{The estimated value $\hat x_i$, $\hat v_i$ and $\hat a_i$ for the $i$th manipulator.}\label{Fig.1}
\end{figure}

\begin{figure}[H]
  \centering
  \includegraphics[width=14cm]{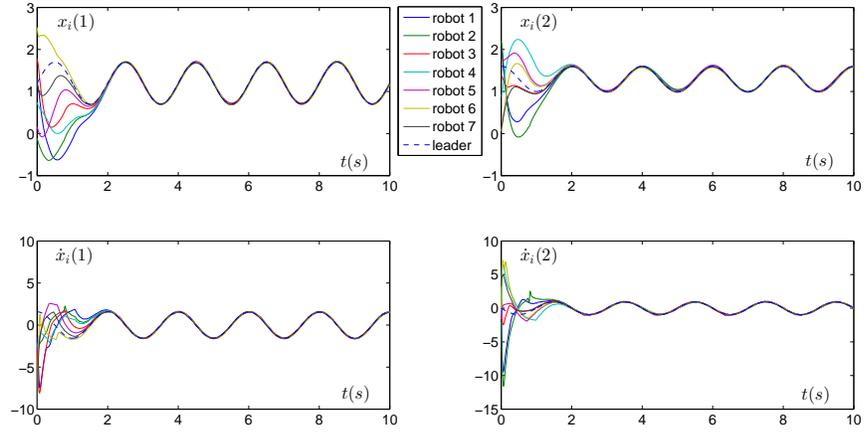}\\
  \caption{The task-space states $x_i$ and $\dot x_i$ for the $i$th manipulator.}\label{Fig.2}
\end{figure}

\begin{figure}[H]
  \centering
  \includegraphics[width=14cm]{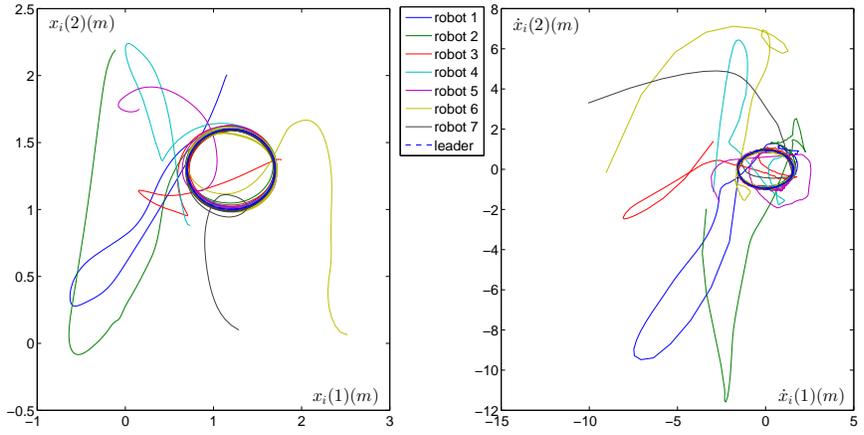}\\
  \caption{Position (the left picture) and velocity (the right picture) of the end-effectors of MHMs in the $XY$ plane.}\label{Fig.3}
\end{figure}

\begin{figure}[H]
  \centering
  \includegraphics[width=14cm]{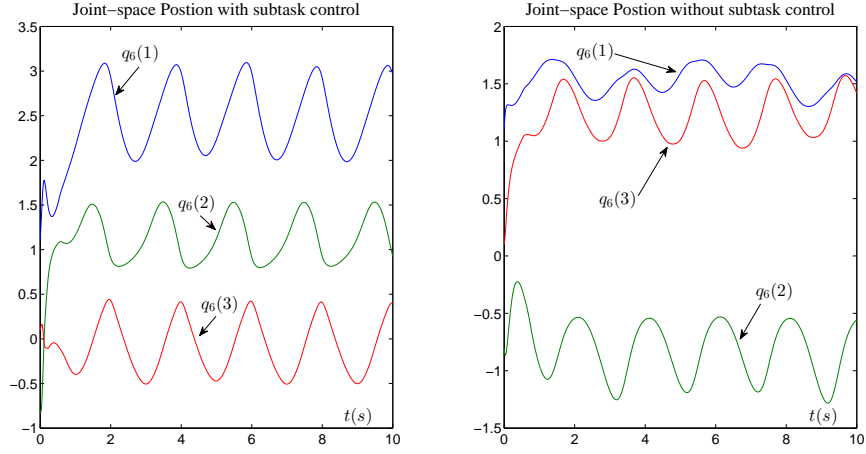}\\
  \caption{Joint-space position of the sixth manipulator with and without subtask control.}\label{Fig.5}
\end{figure}

\begin{figure}[H]
  \centering
  \includegraphics[width=14cm]{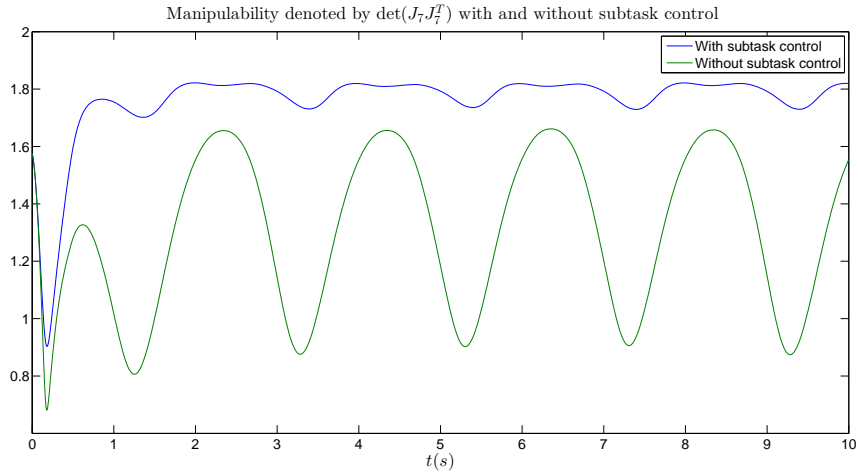}\\
  \caption{Manipulability of the seventh manipulator with and without subtask control.}\label{Fig.4}
\end{figure}

\section{Conclusion}
This paper focus on the task-space coordinated tracking problem of MHMs
with parametric uncertainties and input disturbances under digraphs.
Especially, maintask and subtask for MHMs are designed and addressed simultaneously.
Several conditions (including necessary and sufficient conditions, sufficient conditions) for driving
both the task-space tracking error and the subtask tracking error of MHMs to zero have been derived.
The simulations for the DCEA have shown satisfactory performance in the MHMs containing two-DOF and three-DOF manipulators.
\medskip

\section*{Acknowledgements}
This work was supported in part by the National Natural Science Foundation of China under Grant 61272069.


\bibliographystyle{elsarticle-num}
\bibliography{<your-bib-database>}

\end{document}